\newtheorem{theorem}{Theorem}[section]
\newtheorem{proposition}[theorem]{Proposition}
\newtheorem{corollary}[theorem]{Corollary}
\theoremstyle{definition}\newtheorem{definition}[theorem]{Definition}
\theoremstyle{remark}\newtheorem{remark}[theorem]{Remark}
\numberwithin{equation}{section}
\DeclareMathOperator{\Tr}{Tr}
\DeclareMathOperator{\diag}{diag}
\newcommand{\R}{\mathbb{R}}
\newcommand{\C}{\mathbb{C}}
\newcommand{\e}{\mathrm{e}}
\newcommand{\dd}{\mathrm{d}}
\newcommand{\HH}{\mathbb{H}(n)}
\newcommand{\UU}{\mathrm{U}(n)}
\newcommand{\EE}{\mathbb{E}}
\newcommand{\affilmark}[1]{\textsuperscript{#1}}
\newcommand{\affiltext}[2]{%
  \begingroup
    \def\@thefnmark{#1}%
    \@footnotetext{#2}%
  \endgroup
}
\begin{document}

\title{A Two-HCIZ Gaussian Matrix Model for Non-Intersecting Brownian Bridges}

\author{Maksim Kosmakov\affilmark{$\circ$}}
\date{}
\maketitle

\affiltext{$\circ$}{\textit{Department of Mathematical Sciences, University of Cincinnati,\\
P.O. Box 210025, Cincinnati, OH 45221, USA.} \quad \textit{E-mail:} \href{mailto:kosmakmm@ucmail.uc.edu}{kosmakmm@ucmail.uc.edu}}

\begin{abstract}
We construct a unitarily invariant Hermitian matrix ensemble whose fixed-time eigenvalue law coincides with the Karlin--McGregor law for non-intersecting Brownian bridges with arbitrary finite multiplicities at both endpoints. This provides an explicit matrix-ensemble realization of the known mixed-type multiple orthogonal polynomial and Riemann--Hilbert description of the general multi-start/multi-end problem. We then derive several exact finite-$n$ consequences of this construction. These include a path-space lift as an orbital Hermitian Brownian bridge and a reduction of the partition function to a single compact HCIZ integral with explicit $t$-dependence. We also compare the one-sided reduction with the Gaussian external-field ensemble, showing that, although the two ensembles are spectrally equivalent, their angular statistics are different. Finally, we derive fixed-time Schwinger--Dyson identities and associated resolvent relations for the dressed ensemble.
\end{abstract}

\noindent\textbf{Keywords:}   Non-intersecting Brownian bridges; Harish--Chandra--Itzykson--Zuber integral; random matrices; multiple orthogonal polynomials; Riemann--Hilbert problem; Toda hierarchy.

\section{Introduction}

Non-intersecting Brownian motions and Brownian bridges play a central role in random matrix theory, integrable probability, and statistical physics. They provide fundamental examples of strongly correlated stochastic systems and exhibit universal fluctuation phenomena.  In the statistical-physics literature they appear as systems of  vicious walkers, where the non-crossing constraint produces strong collective effects and leads to exactly solvable structures in a number of regimes~\cite{deGennes1968}. In random matrix theory, the canonical matrix-valued dynamical analogue is Dyson Brownian motion~\cite{Dyson1962}, namely the Hermitian matrix diffusion whose ordered eigenvalues evolve as a non-colliding process.

At the same time, the Harish--Chandra--Itzykson--Zuber (HCIZ) integral and its large-$n$ asymptotics play an important role in random matrix theory, representation theory, and statistical mechanics. There is by now a substantial literature connecting non-intersecting Brownian systems, HCIZ asymptotics, and the Matytsin variational picture in the large-$n$ regime~\cite{GuionnetMaida2005,BunBouchaudMajumdarPotters2014,GrelaMajumdarSchehr2021}.  A precise understanding of the HCIZ large-$n$ limit is also of direct interest in rotationally invariant inference problems, including extensive-rank matrix factorization and denoising~\cite{MaillardKrzakalaMezardZdeborova2022,TroianiErbaKrzakalaMaillardZdeborova2022}.

The central object of the present paper is the fixed-end analogue of non-colliding systems, namely non-intersecting Brownian bridges (NIBBs), that is, systems of one-dimensional Brownian bridges conditioned never to intersect. At a fixed intermediate time, their positions form a determinantal point process governed by the Karlin--McGregor formula~\cite{KarlinMcGregor1959}. From the matrix-model point of view, the fully degenerate one-start/one-end case reduces to the Gaussian Unitary Ensemble (GUE), while closely related one-start/multiple-end generalizations arise through Gaussian random matrices with external source~\cite{BrezinHikami1998}. In the present paper we consider the general finite-$n$ setting with several starting points and several ending points, allowing arbitrary finite multiplicities at both sides.

For such boundary data, the fixed-time distribution is naturally described by a mixed-type multiple orthogonal polynomial ensemble in the sense of Daems--Kuijlaars and by the associated Riemann--Hilbert problem (RHP)~\cite{DaemsKuijlaars2007,Kuijlaars2010}.  Within this framework, important asymptotic and structural results for related external-source and mixed-MOP settings were obtained in~\cite{Bleher2011,BleherDelvauxKuijlaars2011,AriznabarretaManas2015,BerezinKuijlaarsParra2023}. Thus, the general multi-start/multi-end fixed-time law was already understood at the level of multiple orthogonal polynomials and the associated Riemann--Hilbert problem, but not through a corresponding matrix ensemble. In the classical GUE and one-sided external-field settings, by contrast, one has the full correspondence
\begin{equation}
\text{Matrix ensemble}
\Longleftrightarrow
\text{Multiple orthogonal polynomials}
\Longleftrightarrow
\text{Riemann--Hilbert problem}.
\end{equation}
Motivated by this gap, we introduce a Hermitian matrix model in which a Gaussian weight is dressed by two HCIZ factors, one encoding the initial data and one encoding the terminal data. If $A$ and $B$ are diagonal matrices containing the starting and ending locations, and if $t\in(0,1)$ with $\sigma_t^2=t(1-t)$, we consider
\begin{equation*}
\dd\mu_{A,B,t}(M)\propto
\exp\bigl(-\frac{1}{2\sigma_t^2}\Tr(M^2)\bigr)
\int_{\UU}\exp\bigl(\frac{1}{t}\Tr(AUMU^\dagger)\bigr)\dd U\,
\int_{\UU}\exp\bigl(\frac{1}{1-t}\Tr(BVMV^\dagger)\bigr)\dd V\,\dd M.
\end{equation*}
This is the two-HCIZ Gaussian ensemble that we study in the paper.

The focus of the present paper is the exact theory at finite $n$ and the structures that already appear at that level. Much of the existing literature on HCIZ-type models and non-intersecting systems focuses instead on large-$n$ asymptotics, variational descriptions, and limiting spectral behavior. Here, by contrast, we introduce the matrix model underlying the general non-intersecting bridge law and study it from three complementary viewpoints: as a matrix realization of that law, as a source of exact structural identities, and as a unitarily invariant counterpart of the one-sided external-field ensemble. We expect these exact results to also provide useful guidance for the large-$n$ analysis.

The starting point is the finite-$n$ bridge problem. Theorem~\ref{thm:KM-equivalence} identifies the eigenvalue law of the two-HCIZ ensemble with the Karlin--McGregor distribution for non-intersecting Brownian bridges with prescribed boundary data. In this way, the model provides a matrix-integral realization of the general finite-$n$ bridge law. It also admits a natural path-space lift: Theorem~\ref{thm:orbital_bridge_lift} shows that, for each $t\in(0,1)$, the fixed-time ensemble arises as the time-$t$ marginal of an orbital Brownian bridge on Hermitian matrices whose endpoint laws are supported on the unitary orbits of $A$ and $B$.

The matrix representation has further structural consequences. Corollary~\ref{cor:collapse} shows that, after integrating out the Gaussian matrix variable, the partition function reduces to a single compact HCIZ integral times an explicit $t$-dependent prefactor. Equivalently, the nontrivial part of the partition function is identified with the compact HCIZ integral, viewed as a $2$D Toda $\tau$-function under the Miwa specialization determined by $A$ and $B$. At the level of the fixed-time ensemble, Proposition~\ref{prop:master_SD} gives orbital Ward identities and the associated Schwinger--Dyson equations. These, in turn, imply exact resolvent relations for spectral observables; see Corollary~\ref{cor:physical_resolvent_SD}.

A further point, relevant from both the matrix-model and statistical-mechanics perspectives, is that spectral equivalence does not determine the full matrix law. In the one-sided reduction $B=bI_n$, after a scalar shift the two-HCIZ model has the same eigenvalue distribution as the Gaussian external-field ensemble studied, for example, in~\cite{Bleher2011,BleherDelvauxKuijlaars2011}. The two matrix laws nevertheless differ: the external-field ensemble singles out the eigenbasis of the source matrix, whereas the two-HCIZ model remains unitarily invariant. The two models therefore agree on spectral observables but differ on angular observables, including eigenvector overlaps and related isotropy diagnostics~\cite{ PaccoRos2023,BenigniCipolloni2024}. From the statistical-mechanics viewpoint, this reflects two different observational regimes: the HCIZ formulation is natural when one averages over orientations, while the external-field formulation is appropriate when a distinguished laboratory basis breaks rotational symmetry, as in transport problems with fixed scattering channels~\cite{JKMS2023,Beenakker1997}.

This unitary invariance has a concrete consequence for expectations of matrix observables: once the angular and spectral variables are separated, conjugation-invariant quantities reduce to spectral data. In particular, the expectations of matrix powers are scalar matrices, determined by the corresponding trace moments and hence by spectral integrals against the one-point density, or equivalently against the mixed-MOP Christoffel--Darboux kernel; see Proposition~\ref{prop:spectral_kernel_resolvent}.

The paper is organized as follows. Section~\ref{sec:preliminaries} introduces the Brownian-bridge setting and recalls the Karlin--McGregor formula. Section~\ref{sec:spectral-realization} defines the two-HCIZ ensemble and develops its basic finite-$n$ realization, including the orbital bridge interpretation and the partition-function reduction. Section~\ref{sec:reductions} discusses special cases and compares the model with Gaussian external-field ensembles. Section~\ref{sec:unitary_moments} uses unitary invariance to represent moments and resolvents in terms of spectral data. Section~\ref{sec:integrable} treats the additional exact structure of the model, including the Toda interpretation and the fixed-time Ward and Schwinger--Dyson identities. Section~\ref{sec:discussion} closes with a discussion of the results and some directions for future work. The appendices contain auxiliary proofs.

\section*{Acknowledgments}

The author would like to thank A.~Prokhorov for useful discussions and P.~Bleher for sparking the author's interest in external-field models. The author is also grateful to the anonymous reviewers for their careful reading and constructive comments, which significantly improved the presentation of the paper.

\section{Preliminaries and the Karlin--McGregor Law}\label{sec:preliminaries}

We introduce the notation used throughout the paper and recall the Karlin--McGregor formula for the fixed-time positions of non-intersecting Brownian bridges.  

\subsection{Notation and setup}
\label{subsec:notation}

Throughout the paper, $t\in(0,1)$ denotes a fixed observation time, and we set $\sigma_t^2:=t(1-t)$. We write $\HH$ for the real Hilbert space of $n\times n$ Hermitian matrices,
  and $\UU$ for the unitary group. The symbols $\dd U$ and $\dd V$ always
denote normalized Haar probability measure on $\UU$. We write $f(x)\propto g(x)$ when $f(x)=C\,g(x)$ for some constant $C$ independent of $x$.

We denote by
\begin{equation*}
\mathcal W_n:=\{x=(x_1,\dots,x_n)\in\R^n:\ x_1<\cdots<x_n\},
\qquad
\overline{\mathcal W}_n:=\{x=(x_1,\dots,x_n)\in\R^n:\ x_1\le\cdots\le x_n\}
\end{equation*}
the open and closed Weyl chambers of type $A_{n-1}$.

The boundary data are specified by $p$ distinct starting points
$a_1<\cdots<a_p$ with multiplicities $m_1,\dots,m_p$, and
$q$ distinct ending points $b_1<\cdots<b_q$ with multiplicities
$n_1,\dots,n_q$. We encode them in the diagonal matrices
$A=\diag(\bm a)$ and $B=\diag(\bm b)$, where
\begin{equation}
\bm a=
(\underbrace{a_1,\dots,a_1}_{m_1},\dots,\underbrace{a_p,\dots,a_p}_{m_p})
\in\overline{\mathcal W}_n,
\qquad
\bm b=
(\underbrace{b_1,\dots,b_1}_{n_1},\dots,\underbrace{b_q,\dots,b_q}_{n_q})
\in\overline{\mathcal W}_n,
\end{equation}
with
\begin{equation}
a_1<\cdots<a_p,
\qquad
b_1<\cdots<b_q,
\qquad
\sum_{\ell=1}^p m_\ell=n,
\qquad
\sum_{k=1}^q n_k=n.
\end{equation}

For $M\in\HH$, we write its eigenvalues in weakly increasing order as
$\lambda_1\le \cdots\le \lambda_n$,
$\bm\lambda=(\lambda_1,\dots,\lambda_n)\in\overline{\mathcal W}_n$,
and $\Delta(\bm\lambda):=\prod_{1\le i<j\le n}(\lambda_j-\lambda_i)$ denotes the Vandermonde determinant.

We denote by
\begin{equation}
p_s(x,y)=\frac{1}{\sqrt{2\pi s}}
\exp\!\left(-\frac{(x-y)^2}{2s}\right),
\qquad s>0,
\end{equation}
the one-dimensional heat kernel, and by
\begin{equation}
p_s(X,Y)=\frac{1}{(2\pi s)^{n^2/2}}
\exp\!\left(-\frac{1}{2s}\Tr(X-Y)^2\right),
\qquad s>0,
\end{equation}
the heat kernel on $\HH$.

A system of $n$ non-intersecting Brownian bridges (NIBBs) from $\bm a$ to
$\bm b$ on $[0,1]$ is a collection
\begin{equation}
\bm X(t)=(X_1(t),\dots,X_n(t)),\qquad 0\le t\le 1,
\end{equation}
whose law is that of $n$ independent one-dimensional Brownian bridges
from $a_i$ to $b_i$, conditioned never to intersect. Thus $\bm X(0)=\bm a$, $\bm X(1)=\bm b$, and $\bm X(t)\in \mathcal W_n$ for $0<t<1$.

\subsection{Karlin--McGregor law and HCIZ formula}

For a single Brownian bridge from $a$ at time $0$ to $b$ at time $1$, the
intermediate-time density is given by 
\begin{equation}
\varrho_{a,b,t}(x)=\frac{p_t(a,x)\,p_{1-t}(x,b)}{p_1(a,b)}
\propto
\exp\!\left(\frac{ax}{t}+\frac{bx}{1-t}-\frac{x^2}{2\sigma_t^2}\right).
\end{equation}

For $n$ non-intersecting Brownian bridges with ordered boundary data
$\bm a,\bm b\in\overline{\mathcal W}_n$, the Karlin--McGregor formula gives the
fixed-time density on $\mathcal W_n$.

\begin{proposition}[Karlin--McGregor]
The joint density of the positions
$\bm\lambda=(\lambda_1,\dots,\lambda_n)\in \mathcal W_n$ of $n$ non-intersecting
Brownian bridges at time $t\in(0,1)$ is
\begin{equation}
\label{eq:KM-reduced}
\varrho_{\mathrm{KM}}(\bm\lambda;\bm a,\bm b,t)
\propto
\det\!\bigl[\e^{a_i\lambda_j/t}\bigr]_{i,j=1}^n\,
\det\!\bigl[\e^{b_k\lambda_j/(1-t)}\bigr]_{k,j=1}^n\,
\e^{-\frac{1}{2\sigma_t^2}\sum_{j=1}^n\lambda_j^2}.
\end{equation}
In the confluent case, repeated start or end points are
understood in the usual limiting sense.
\end{proposition}

We also recall the Harish--Chandra--Itzykson--Zuber \cite{HarishChandra1957,ItzyksonZuber1980} integral. For Hermitian
matrices $X,Y\in\HH$ with eigenvalues
$\bm x=(x_1,\dots,x_n)$ and $\bm y=(y_1,\dots,y_n)$, one has
\begin{equation}
\int_{\UU}\exp\!\big(\Tr(XUYU^\dagger)\big)\,\dd U
=
\left(\prod_{k=1}^{n-1}k!\right)
\frac{\det[\e^{x_i y_j}]_{i,j=1}^n}{\Delta(\bm x)\Delta(\bm y)}.
\end{equation}
If some eigenvalues coincide, the right-hand side is understood by confluent continuation.

\section{The Two-HCIZ Ensemble and its Spectral Realization}
\label{sec:spectral-realization}

\subsection{Spectral and path-space realization}
We now define our central object. Let $A$ and $B$ be the diagonal matrices encoding the start and end points as defined in Subsection~\ref{subsec:notation}.

\begin{definition} 
The two-HCIZ dressed Gaussian measure on $\HH$ is defined by
\begin{equation}
\label{eq:twoHCIZmeasure}
\dd\mu_{A,B,t}(M) = \frac{1}{Z_{A,B,t}} \e^{-\frac{1}{2\sigma_t^2}\Tr M^2} \left(\int_{\UU}\e^{\frac{1}{t}\Tr(AUMU^\dagger)}\dd U\right) \left(\int_{\UU}\e^{\frac{1}{1-t}\Tr(BVMV^\dagger)}\dd V\right) \dd M,
\end{equation}
where $\dd M$ denotes the Euclidean Lebesgue measure on $\HH$ induced by the inner product $\langle X,Y\rangle=\Tr(XY)$,  and $Z_{A,B,t}$ is the normalization constant (\textit{partition function})
\begin{equation}
 Z_{A,B,t}=\int_{\HH} \e^{-\frac{1}{2\sigma_t^2}\Tr M^2} \left(\int_{\UU}\e^{\frac{1}{t}\Tr(AUMU^\dagger)}\dd U\right) \left(\int_{\UU}\e^{\frac{1}{1-t}\Tr(BVMV^\dagger)}\dd V\right) \dd M.
\end{equation}
\end{definition}
This model consists of a standard GUE-type Gaussian measure on $M$, which is then dressed by two separate HCIZ-type integrals. The first integral couples the eigenvalues of $M$ to the start configuration $A$, while the second couples them to the end configuration $B$. Our first result shows that this ensemble gives exactly the fixed-time law of non-intersecting Brownian bridges.

\begin{theorem}[Finite-$n$ spectral equivalence]
\label{thm:KM-equivalence}
The joint eigenvalue density of the two-HCIZ ensemble \eqref{eq:twoHCIZmeasure} coincides with the Karlin--McGregor law \eqref{eq:KM-reduced} for any finite $n$, including the confluent case of repeated start or end points.
\end{theorem}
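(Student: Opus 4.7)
The strategy is to push the matrix integral \eqref{eq:twoHCIZmeasure} through the standard eigenvalue--eigenvector decomposition and then invoke the HCIZ identity \eqref{eq:HCIZ} twice. First I would write $M = U_M \Lambda U_M^\dagger$ with $\Lambda = \diag(\lambda_1,\dots,\lambda_n)$ and $U_M \in \UU/\mathrm{T}^n$, so that the flat Lebesgue measure decomposes as
\begin{equation}
\dd M = c_n \, \Delta(\bm\lambda)^2 \, \dd\bm\lambda \, \dd[U_M],
\end{equation}
for an explicit constant $c_n$. The Gaussian factor depends only on the eigenvalues since $\Tr M^2 = \sum_j \lambda_j^2$.

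Next I would exploit the bi-invariance of Haar measure. In each of the two inner integrals, replace $U \mapsto U U_M$ (and $V \mapsto V U_M$); this removes $U_M$ from the HCIZ couplings entirely, since $\Tr(A U_M \Lambda U_M^\dagger U U_M U_M^\dagger) = \Tr(A U \Lambda U^\dagger)$ after absorbing the right translation. Consequently the inner integrals depend only on $\bm\lambda$, and the $U_M$--integration produces a $\bm\lambda$--independent constant that can be folded into the normalization.

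Assuming first that $A$ and $B$ have distinct eigenvalues, I would then apply the HCIZ formula \eqref{eq:HCIZ} to each of the two unitary integrals:
\begin{equation}
\int_{\UU} \e^{\frac{1}{t}\Tr(A U \Lambda U^\dagger)} \dd U = \frac{C_n \det[\e^{a_i \lambda_j / t}]_{i,j=1}^n}{\Delta(\bm a / t)\,\Delta(\bm\lambda)}, \qquad \int_{\UU} \e^{\frac{1}{1-t}\Tr(B V \Lambda V^\dagger)} \dd V = \frac{C_n \det[\e^{b_k \lambda_j / (1-t)}]_{k,j=1}^n}{\Delta(\bm b/(1-t))\,\Delta(\bm\lambda)},
\end{equation}
with $C_n = \prod_{k=1}^{n-1} k!$. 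Multiplying these against the Jacobian $\Delta(\bm\lambda)^2$ cancels both $\Delta(\bm\lambda)$ factors in the denominators, leaving the product of the two determinants times the Gaussian $\e^{-\sum_j \lambda_j^2 /(2\sigma^2)}$. All $\bm\lambda$--independent factors ($c_n$, $C_n^2$, $\Delta(\bm a/t)^{-1}$, $\Delta(\bm b/(1-t))^{-1}$, and the $U_M$--integral volume) are absorbed into the normalization, and what remains is exactly \eqref{eq:KM-reduced}.

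The only real subtlety is the confluent case, when $A$ or $B$ has repeated eigenvalues. Here both the HCIZ ratio and the KM ratio present a $0/0$ indeterminacy of the same kind. I would handle this by a continuity argument: perturb $A$ and $B$ to nearby matrices $A_\varepsilon, B_\varepsilon$ with simple spectra, apply the argument above to get the identity of eigenvalue densities for every $\varepsilon > 0$, and then pass to the limit $\varepsilon \to 0$. Both sides are defined via the same confluent l'Hôpital-type limit (derivatives in the coinciding $a_\ell$'s and $b_k$'s), as the remark following \eqref{eq:KM-reduced} and the reference to \cref{app:confluent-HCIZ} make precise; since the matrix model itself depends continuously on $A$ and $B$, the limits on the two sides agree. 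This confluent limit is the one place that requires genuine care, but it is routine once the generic case is in hand.
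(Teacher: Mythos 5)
Your proposal is correct and follows essentially the same route as the paper's proof in \cref{sec:proof-km}: Weyl's integration formula, unitary invariance to eliminate the angular variable, and two applications of the HCIZ identity whose $\Delta(\bm\lambda)$ denominators cancel against the Jacobian $\Delta(\bm\lambda)^2$. Your explicit $\varepsilon$-perturbation argument for the confluent case and the more careful scaling $\Delta(\bm a/t)$, $\Delta(\bm b/(1-t))$ in the denominators are minor refinements of what the paper leaves to \cref{app:confluent-HCIZ} and to the normalization constant, respectively.
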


\begin{proof}
Diagonalize $M=W\Lambda W^\dagger$ with $\Lambda=\diag(\lambda_1,\dots,\lambda_n)$.  
By Weyl's integration formula (see, for example, \cite{Mehta2004}),
\begin{equation}
\dd M = \frac{(2\pi)^{n(n-1)/2}}{\prod_{k=1}^{n-1} k!} \Delta(\bm\lambda)^2 \prod_{j=1}^n \dd \lambda_j\,\dd W .
\end{equation}
Since the remaining weight depends on $M$ only through its eigenvalues, the integral over $W$ equals $1$.

Applying the HCIZ formula to the two orbital factors gives
\begin{equation*}
\int_{\UU} e^{\frac1t \Tr(AU\Lambda U^\dagger)}\,\dd U
= C_n(t)  \frac{\det[e^{a_i\lambda_j/t}]}{\Delta(\bm a)\Delta(\bm\lambda)},
\quad
\int_{\UU} e^{\frac1{1-t} \Tr(BV\Lambda V^\dagger)}\,\dd V
= C_n(1-t) \frac{\det[e^{b_i\lambda_j/(1-t)}]}{\Delta(\bm b)\Delta(\bm\lambda)},
\end{equation*}
with $C_n(s)=s^{n(n-1)/2}\prod_{k=1}^{n-1}k!$.  Substituting into \eqref{eq:twoHCIZmeasure}, one obtains
\begin{equation}
\varrho_{A,B,t}(\bm\lambda)
\propto
\Delta(\bm\lambda)^2
e^{-\frac1{2\sigma_t^2}\sum_{j=1}^n \lambda_j^2}
\frac{\det[e^{a_i\lambda_j/t}]}{\Delta(\bm a)\Delta(\bm\lambda)}
\frac{\det[e^{b_i\lambda_j/(1-t)}]}{\Delta(\bm b)\Delta(\bm\lambda)}.
\end{equation}
Since the factors $\Delta(\bm a)^{-1}$ and $\Delta(\bm b)^{-1}$ are independent of $\bm\lambda$, they can be    absorbed into the normalization. Therefore, for $\bm\lambda\in\mathcal W_n$,
\begin{equation}
\varrho_{A,B,t}(\bm\lambda)
\propto
\det[e^{a_i\lambda_j/t}]
\det[e^{b_i\lambda_j/(1-t)}]
e^{-\frac1{2\sigma_t^2}\sum_{j=1}^n \lambda_j^2}.
\end{equation}
 The confluent case is obtained by the standard HCIZ limiting procedure when eigenvalues of $A$ or $B$ coalesce.
\end{proof}

The two-HCIZ ensemble also admits a natural path-space realization. More precisely, it can be realized as the time-$t$ marginal of a Brownian bridge on Hermitian matrices whose endpoints are distributed on the conjugacy classes of $A$ and $B$ with the canonical heat-kernel coupling. This is the matrix-valued analogue of the usual Doob-transform / non-intersecting-bridge picture.

To formulate this path-space realization, we introduce the endpoint coupling
\begin{equation}
\label{eq:endpoint_coupling}
\Pi_{A,B}(\dd X\,\dd Y)
:=
\frac{1}{\mathcal Z_{A,B}}\,
p_1(X,Y)\,\nu_A(\dd X)\,\nu_B(\dd Y),
\end{equation}
where
\begin{equation}
\mathcal Z_{A,B}
:=
\int_{\mathcal O_A}\int_{\mathcal O_B}
p_1(X,Y)\,\nu_A(\dd X)\,\nu_B(\dd Y).
\end{equation}
Here
$\mathcal O_A:=\{UAU^\dagger:U\in\UU\}$ and
$\mathcal O_B:=\{VBV^\dagger:V\in\UU\}$
are the unitary orbits of $A$ and $B$, and
$\nu_A,\nu_B$ are the corresponding orbital Haar probability measures.

We write $C([0,1],\HH)$ for the space of continuous $\HH$-valued paths on $[0,1]$. If $M_\bullet\in C([0,1],\HH)$, then $M_s$ denotes its value at time $s$. For each $(X,Y)\in \mathcal O_A\times\mathcal O_B$, let $\mathbf P^{X,Y}$ denote the law on $C([0,1],\HH)$ of the Hermitian Brownian bridge from $X$ at time $0$ to $Y$ at time $1$. Mixing these bridge laws against the endpoint coupling $\Pi_{A,B}$, we obtain a probability measure $\mathbf Q_{A,B}$ on $C([0,1],\HH)$ given by
\begin{equation}
\mathbf Q_{A,B}(\dd M_\bullet)
:=
\int_{\mathcal O_A}\int_{\mathcal O_B}
\mathbf P^{X,Y}(\dd M_\bullet)\,
\Pi_{A,B}(\dd X\,\dd Y).
\end{equation}

\begin{theorem}[Orbital Brownian-bridge]
\label{thm:orbital_bridge_lift}
For $\mathbf Q_{A,B}(\dd M_\bullet)$ the following hold:
\begin{enumerate}
\item The endpoint pair has law $(M_0,M_1)\sim \Pi_{A,B}$,
and in particular $M_0\sim \nu_A$, $M_1\sim \nu_B$.

\item For every $t\in(0,1)$, the time-$t$ marginal of $\mathbf Q_{A,B}$ has
density
\begin{equation}
\label{eq:bridge_marginal_factorized}
q_t(M)\propto
\left(\int_{\mathcal O_A} p_t(X,M)\,\nu_A(\dd X)\right)
\left(\int_{\mathcal O_B} p_{1-t}(M,Y)\,\nu_B(\dd Y)\right)
\end{equation}
with respect to Lebesgue measure on $\HH$.

\item Equivalently, $  q_t(M) \,\dd M=
\dd\mu_{A,B,t}(M)$,
so the two-HCIZ ensemble is exactly the
time-$t$ marginal of the orbital Brownian bridge $\mathbf Q_{A,B}$.
\end{enumerate}
\end{theorem}

\begin{proof}
For a Brownian bridge from $X$ to $Y$, the time-$t$ density is
\begin{equation}
\frac{p_t(X,M)\,p_{1-t}(M,Y)}{p_1(X,Y)}.
\end{equation}
Integrating \eqref{eq:endpoint_coupling} against the conditional bridge law from
$X$ to $Y$, the factor $p_1(X,Y)$ cancels, and one obtains \eqref{eq:bridge_marginal_factorized}. The endpoint pair is distributed according to $\Pi_{A,B}$ by construction, and the marginal statements follow immediately. For example, the $X$-marginal is proportional to
\begin{equation}
\left(\int_{\mathcal O_B} p_1(X,Y)\,\nu_B(\dd Y)\right)\nu_A(\dd X),
\end{equation}
and the inner integral is constant on $\mathcal O_A$ by conjugation
invariance.

Finally, expanding $\Tr(X-M)^2=\Tr X^2+\Tr M^2-2\Tr(XM)$. Since $\Tr X^2=\Tr A^2$ on $\mathcal O_A$, one has
\begin{align*}
\int_{\mathcal O_A} p_t(X,M)\,\nu_A(\dd X)
&=
C_t(A)\,
\exp\!\left(-\frac{1}{2t}\Tr M^2\right)
\int_{\UU}\exp\!\left(\frac{1}{t}\Tr(AUMU^\dagger)\right)\dd U,
\end{align*}
with $C_t(A)$ independent of $M$, and likewise
\begin{equation*}
\int_{\mathcal O_B} p_{1-t}(M,Y)\,\nu_B(\dd Y)
=
C_{1-t}(B)\,
\exp\!\left(-\frac{1}{2(1-t)}\Tr M^2\right)
\int_{\UU}\exp\!\left(\frac{1}{1-t}\Tr(BVMV^\dagger)\right)\dd V.
\end{equation*}
Multiplying the two factors in \eqref{eq:bridge_marginal_factorized} gives
\begin{equation}
q_t(M)\propto
\exp\!\left(-\frac{1}{2\sigma_t^2}\Tr M^2\right)
\int_{\UU}\exp\!\left(\frac1t\Tr(AUMU^\dagger)\right)\dd U
\int_{\UU}\exp\!\left(\frac1{1-t}\Tr(BVMV^\dagger)\right)\dd V,
\end{equation}
thus \(q_t(M)\,\dd M=\dd\mu_{A,B,t}(M)\), proving \textup{(iii)}.
\end{proof}
\begin{remark}
\label{rem:Doob_bridge}
The process $\mathbf Q_{A,B}$ is the natural Brownian bridge on $\HH$
associated with the orbital endpoint data $(\nu_A,\nu_B)$. Equivalently, it may
be viewed as the Doob $h$-transform of Hermitian Brownian motion by
\begin{equation}
h_s(M):=\int_{\mathcal O_B} p_{1-s}(M,Y)\,\nu_B(\dd Y),
\end{equation}
started from the initial law $\nu_A$.
\end{remark}

Projecting the orbital bridge onto eigenvalues recovers the standard
Doob-transform description of non-intersecting Brownian bridges in the Weyl
chamber. Thus the matrix-valued construction above provides a finite-$n$ matrix
derivation of the bridge dynamics familiar from the non-intersecting Brownian
motion / Dyson-bridge literature; compare, for example,
\cite{Dyson1962,KatoriTanemura2003,GrelaMajumdarSchehr2021}.

\begin{corollary}
\label{cor:orbital_bridge_eigenvalue_dynamics}
Let $\mathbf Q_{A,B}$ be the orbital Brownian bridge from
Theorem~\ref{thm:orbital_bridge_lift}, and let
$\bm\lambda(s)$, $0<s<1$, denote the ordered eigenvalues of $M_s$.
For every $0<s<1$, one has $\bm\lambda(s)\in\mathcal W_n$, and on $(0,1)$ the
process is a time-inhomogeneous diffusion with generator
\begin{equation}
\label{eq:generator_Doob_K_main}
(\mathcal G_s f)(\bm\lambda)
=
\frac12\sum_{i=1}^n \partial_i^2 f(\bm\lambda)
+
\sum_{i=1}^n
\bigl(\partial_i \log K_s^{(B)}(\bm\lambda)\bigr)\,\partial_i f(\bm\lambda),
\end{equation}
where $K_s^{(B)}$ denotes the positive terminal Karlin--McGregor $h$-function
on $\mathcal W_n$ associated with the endpoint data $B$.  If $b_1<\cdots<b_n$, then
\begin{equation}
\label{eq:K_s_terminal_main}
K_s^{(B)}(\bm\lambda)
:=
\det\!\bigl[p_{1-s}(\lambda_i,b_j)\bigr]_{i,j=1}^n,
\end{equation}
whereas in the presence of repeated terminal points, $K_s^{(B)}$ is understood
in the standard confluent Karlin--McGregor sense. Equivalently,
\begin{equation}
\dd \lambda_i(s)
=
\dd X_i(s)
+
\partial_i\log K_s^{(B)}(\bm\lambda(s))\,\dd s,
\qquad i=1,\dots,n,
\end{equation}
where $X_1,\dots,X_n$ are independent standard Brownian motions.
If $b_1<\cdots<b_n$, this may be written explicitly as
\begin{equation}
\dd \lambda_i(s)
=
\dd X_i(s)
+
\left(
-\frac{\lambda_i(s)}{1-s}
+
\partial_i\log\det[\exp(b_k\lambda_j(s)/(1-s))]_{k,j=1}^n
\right)\dd s.
\end{equation}
\end{corollary}
\begin{proof}
    The proof is given in Appendix~\ref{app:orbital_bridge_generator}.
\end{proof}
\begin{remark}
   Note that the forward generator depends only on the terminal data $B$; the
initial data $A$ enter through the initial law at time $s=0$.  
\end{remark} 

\subsection{Partition function}

We next consider the partition function. Completing the square reduces it to a single compact HCIZ integral.

\begin{corollary}[Single-HCIZ collapse]
\label{cor:collapse}
The partition function $Z_{A,B,t}$ admits the closed form
\begin{equation}\label{eq:singleHCIZ}
Z_{A,B,t}
=(2\pi\sigma_t^2)^{\frac{n^2}{2}}\,
\exp\!\Big(\frac{1-t}{2t}\Tr A^2+\frac{t}{2(1-t)}\Tr B^2\Big)
\int_{\UU}\exp\!\big(\Tr(AWBW^\dagger)\big)\,\dd W .
\end{equation}
\end{corollary}

\begin{proof}
For fixed $U,V\in\UU$, set
\begin{equation}
\mu(U,V):=(1-t)U^\dagger A U+t\,V^\dagger B V.
\end{equation}
Then
\begin{equation}
-\frac{1}{2\sigma_t^2}\Tr M^2
+\Tr\!\Big(\Big(\frac1t\,U^\dagger A U+\frac1{1-t}\,V^\dagger B V\Big)M\Big)
=
-\frac{1}{2\sigma_t^2}\Tr\!\big(M-\mu(U,V)\big)^2
+\frac{1}{2\sigma_t^2}\Tr\mu(U,V)^2.
\end{equation}
Thus, conditionally on $(U,V)$,
\begin{equation}
    M=X+\mu(U,V),
\end{equation}
where $X$ is a centered Hermitian Gaussian random matrix with law proportional to $\exp(-\Tr X^2/(2\sigma_t^2))\,dX$.
Therefore the $M$-integral equals
$(2\pi\sigma_t^2)^{\frac{n^2}{2}}
\exp\!\Big(\frac{1}{2\sigma_t^2}\Tr\mu(U,V)^2\Big)$,
with
\begin{equation}
\Tr\mu(U,V)^2
=
(1-t)^2\Tr A^2+t^2\Tr B^2
+2t(1-t)\Tr\!\big(U^\dagger A U\,V^\dagger B V\big).
\end{equation}
Setting $W:=VU^\dagger$, we get
$\Tr\!\big(U^\dagger A U\,V^\dagger B V\big)=\Tr\!\big(AW^\dagger B W\big)$, 
and by Haar bi-invariance the change of variables $V\mapsto W=VU^\dagger$ preserves Haar measure, so the integrand becomes independent of $U$.   Therefore,
\begin{equation}
Z_{A,B,t}
=
(2\pi\sigma_t^2)^{n^2/2}
\exp\!\Big(\frac{1-t}{2t}\Tr A^2+\frac{t}{2(1-t)}\Tr B^2\Big)
\int_{\UU}\exp\!\big(\Tr(AW^\dagger BW)\big)\,\dd W.
\end{equation}
Finally, by Haar invariance under \(W\mapsto W^\dagger\), this equals
\eqref{eq:singleHCIZ}.
\end{proof}

Among the simplest genuinely two-sided families are the rank-$(r,\ell)$
two-point spectrum model and the balanced signature model. In these cases, the single-HCIZ collapse reduces the partition function to Jacobi-type integrals on $(0,1)$ with a linear exponential tilt, connecting the present finite-$n$ model to families studied in~\cite{EdelmanSutton2008,BasorChenEhrhardt2010,ForresterWitte2002}.

For two nonnegative integers $r$ and $\ell$ satisfying $r+\ell\le n$, let
$p=\min(r,\ell)$ and define
\begin{equation} 
Z_{n,r,\ell}:=\int_{(0,1)^p} \Delta(\lambda)^2 \prod_{i=1}^{p} \lambda_i^{|r-\ell|}(1-\lambda_i)^{\,n-r-\ell}\,\dd \lambda.
\end{equation}
\begin{proposition}
\label{prop:tilted_jacobi}
For the following two families, the two-HCIZ partition function reduces to a tilted
Jacobi unitary ensemble on $(0,1)$.

\begin{enumerate}
\item[\textup{(i)}] \textup{Projection / two-point spectrum case.}
Let
\begin{equation}
A=\diag(\underbrace{a,\dots,a}_{r},\underbrace{0,\dots,0}_{n-r}),
\qquad
B=\diag(\underbrace{b,\dots,b}_{\ell},\underbrace{0,\dots,0}_{n-\ell}),
\end{equation}
and define
\begin{equation}
p:=\min(r,\ell),\qquad
j:=n-r-\ell,\qquad
k:=|r-\ell|.
\end{equation}
Then the partition function is proportional to a Jacobi-type integral on $(0,1)$
with exponents $(k,j)$ and linear tilt parameter $\gamma=ab$, namely
\begin{equation}\label{eq:proj_direct_jacobi}
Z_{A,B,t}
=
K^{\mathrm{proj}}_{n,t,a,b;r,\ell}
\int_{(0,1)^p}
\Delta(\lambda)^2
\prod_{i=1}^{p}
\lambda_i^{k}(1-\lambda_i)^j e^{ab\lambda_i}\,\dd \lambda,
\end{equation}
where
\begin{equation}\label{eq:proj_direct_jacobi_const}
K^{\mathrm{proj}}_{n,t,a,b;r,\ell}
=
(2\pi\sigma_t^2)^{n^2/2}
\exp\!\Big(\frac{1-t}{2t}a^2r+\frac{t}{2(1-t)}b^2\ell\Big)\,
Z_{n,r,\ell}^{-1}.
\end{equation}

\item[\textup{(ii)}] \textup{Balanced signature case.}
Assume $n=2m$, and let
\begin{equation}
J:=\diag(I_m,-I_m),\qquad A=aJ,\qquad B=bJ.
\end{equation}
Then the partition function is proportional to
\begin{equation}\label{eq:sig_direct_jacobi}
Z_{A,B,t}
=
K^{\mathrm{sig}}_{n,t,a,b}
\int_{(0,1)^m}
\Delta(\lambda)^2
\prod_{i=1}^{m} e^{4ab\,\lambda_i}\,\dd \lambda,
\end{equation}
where
\begin{equation}\label{eq:sig_direct_jacobi_const}
K^{\mathrm{sig}}_{n,t,a,b}
=
(2\pi\sigma_t^2)^{n^2/2}
\exp\!\Big(\frac{1-t}{2t}na^2+\frac{t}{2(1-t)}nb^2\Big)\,
e^{-2mab}\,
Z_{2m,m,m}^{-1}.
\end{equation}
\end{enumerate}
\end{proposition}

\begin{proof}
The proof is given in Appendix~\ref{app:tilted_jacobi}.
\end{proof}

\begin{remark}
The projection formula above is stated in the regime $r+\ell\le n$, where the upper-left block of a Haar unitary has the standard Jacobi block density \eqref{eq:appendix_jacobi_block_law}. In the complementary regime $r+\ell>n$, the block has $r+\ell-n$ deterministic singular values equal to $1$, while the remaining nontrivial singular values are obtained from the complementary $(n-r)\times (n-\ell)$ block and thus again satisfy a Jacobi law after the change of variables $\lambda\mapsto 1-\lambda$. 
\end{remark}

\begin{remark}
Although the finite-$n$ reductions above are explicit, the corresponding
large-$n$ asymptotics of the compact HCIZ factors require a more delicate
analysis, see \cite{GuionnetMaida2005} for related results on HCIZ asymptotics. The corresponding
matrix-valued measures for these models also require separate study.
\end{remark}

\section{Special Reductions and Spectral versus Angular Observables}
\label{sec:reductions}

In this section we discuss two special reductions of the two-HCIZ ensemble at the matrix level. The scalar case recovers the centered GUE. More importantly, the one-sided reduction identifies the two-HCIZ ensemble as a unitarily invariant counterpart of the Gaussian external-field model: the two ensembles have the same eigenvalue law, but different matrix laws. This comparison will serve as a guide for the later analysis of spectral and angular observables.

\subsection{Special reductions}

\paragraph{The scalar reduction}

If $A=aI_n$ and $B=bI_n$, then after the scalar translation
$\widetilde M=M-((1-t)a+tb)I_n$ the two-HCIZ ensemble reduces to a centered GUE
with variance $\sigma_t^2=t(1-t)$.

\paragraph{The one-sided reduction and the external-field ensemble}

If $B=bI_n$, then after the scalar shift $\widetilde M:=M-tbI_n$ the two-HCIZ ensemble takes the form
\begin{equation}
\dd\mu_{A,bI,t}(\widetilde M)\;\propto\;
\e^{-\frac{1}{2\sigma_t^2}\Tr \widetilde M^{\,2}}
\left(\int_{\UU}\e^{\frac{1}{t}\Tr(AU\widetilde M U^\dagger)}\,\dd U\right)\dd \widetilde M.
\end{equation}
Its eigenvalue law coincides with that of the Gaussian external-field ensemble
\begin{equation}
\dd\nu_{A,t}(M)
\;\propto\;
\exp\!\Big(-\frac{1}{2\sigma_t^2}\Tr M^2+\frac{1}{t}\Tr(AM)\Big)\,\dd M.
\end{equation}

The two matrix laws are nevertheless different. In the external-field ensemble, the source matrix $A$ selects a preferred basis, so the law is not conjugation invariant and angular observables retain information about the orientation of the eigenvectors relative to that basis. By contrast, the two-HCIZ ensemble remains unitarily invariant: conditional on the spectrum, the eigenbasis is Haar distributed. The two ensembles therefore agree on spectral observables, but differ on angular observables such as eigenvector overlaps and related isotropy diagnostics~\cite{PaccoRos2023,BenigniCipolloni2024}.

From the statistical-mechanics perspective, these correspond to different
observational regimes. The HCIZ formulation is natural when one averages over orientations and focuses on basis-independent observables, while the
external-field formulation is appropriate when a distinguished laboratory basis breaks rotational symmetry, for example through prescribed leads or transport channels~\cite{JKMS2023,Beenakker1997}. This distinction motivates the next sections: unitary invariance reduces matrix moments and resolvents to spectral quantities, while the Haar eigenbasis controls the angular statistics.

\subsection{Angular statistics}

By construction the two-HCIZ density \eqref{eq:twoHCIZmeasure} is invariant
under unitary conjugation. Passing to spectral coordinates
\begin{equation}
M=\Psi\Lambda\Psi^\dagger,
\qquad
\Lambda=\diag(\lambda_1,\dots,\lambda_n),
\end{equation}
the two-HCIZ weight depends only on the spectral variable $\Lambda$, whereas
the angular variable $\Psi$ contributes only through the Haar factor in Weyl's
integration formula. It follows that the conditional law of $\Psi$ at fixed
$\Lambda$ is Haar. Therefore for every integrable observable $\mathcal O_{\mathrm{ang}}(\Psi)$,
\begin{equation}
\EE \!\left[\mathcal O_{\mathrm{ang}}(\Psi)\mid \Lambda\right]
=
\EE_{\mathrm{Haar}}\!\left[\mathcal O_{\mathrm{ang}}\right].
\end{equation}
 Consequently, for every integrable spectral observable $\mathcal O_{\mathrm{spec}}(\Lambda)$ and every integrable angular observable $\mathcal O_{\mathrm{ang}}(\Psi)$,
\begin{equation}
\EE \big[\mathcal O_{\mathrm{spec}}(\Lambda)\,\mathcal O_{\mathrm{ang}}(\Psi)\big]
=
\EE_{\mathrm{KM}}\big[\mathcal O_{\mathrm{spec}}\big]\,
\EE_{\mathrm{Haar}}\big[\mathcal O_{\mathrm{ang}}\big],
\end{equation}
where $\EE_{\mathrm{KM}}$ denotes expectation with respect to the
Karlin--McGregor spectral law \eqref{eq:KM-reduced}. Accordingly, all standard overlap statistics are universal and coincide with
their classical Haar values; see, for example,
\cite{Mehta2004}, \cite{AGZ2010}.

\begin{corollary} 
Fix an
eigenvector $\psi_j$ of $M$. Let $u\in\C^n$ be a deterministic unit vector, let
$e_1,\dots,e_n$ be a deterministic orthonormal basis, and let $P$ be a
deterministic orthogonal projector of rank $r$. Then the following hold.

\begin{enumerate}
\item The one-dimensional overlap satisfies $|\langle u,\psi_j\rangle|^2 \sim \mathrm{Beta}(1,n-1)$.

\item The projector overlap satisfies $\langle \psi_j,P\psi_j\rangle \sim \mathrm{Beta}(r,n-r)$.

\item The coordinate masses
$\bigl(|\langle e_1,\psi_j\rangle|^2,\dots,|\langle e_n,\psi_j\rangle|^2\bigr) 
\sim \mathrm{Dirichlet}(1,\dots,1)$.

\item For the inverse participation ratios $ Y_q(\psi_j):=\sum_{m=1}^n |\langle e_m,\psi_j\rangle|^{2q}$,
  $ q\in\mathbb N$, one has
\begin{equation}
\EE\big[Y_q(\psi_j)\big]
=
\frac{n!\,q!}{(n+q-1)!}.
\end{equation}
\end{enumerate}
\end{corollary}

\begin{remark} 
More generally, for products of entries of $\Psi$ and their complex conjugates one
has the unitary Weingarten expansion
\begin{equation}
\EE_{\mathrm{Haar}}\!\left[\prod_{a=1}^{m}\Psi_{i_a j_a}\,\overline{\Psi}_{i_a' j_a'}\right]
=
\sum_{\sigma,\tau\in S_m}
\prod_{a=1}^{m}\delta_{i_a,i'_{\sigma(a)}}
\prod_{a=1}^{m}\delta_{j_a,j'_{\tau(a)}}
\,\mathrm{Wg}_n(\tau\sigma^{-1}),
\end{equation}
where $\mathrm{Wg}_n$ is the unitary Weingarten function; see the recent discussion in~\cite{CollinsMatsumotoNovak2022} and the references therein.
\end{remark}

\section{Moment Reduction and Spectral Representations}
\label{sec:unitary_moments}

\subsection{Moments as spectral integrals and resolvents}

Using unitary invariance, matrix observables reduce naturally to spectral ones. We begin with the simplest example, namely the trace, whose distribution can be computed explicitly.

\begin{proposition}
Let
\begin{equation} 
X:=\Tr M,
\qquad
\mu_t:=(1-t)\Tr A+t\,\Tr B.
\end{equation}
Then the moment generating function of $X$ is
\begin{equation}\label{eq:trace_mgf_shift_formula}
\mathcal M_{A,B,t}(s):=\EE\big[e^{s\Tr M}\big]
= \exp\!\left(\mu_t s+\frac{n\sigma_t^2}{2}s^2\right).
\end{equation}
Thus $\Tr M$ is exactly Gaussian with mean $\mu_t$ and variance $n\sigma_t^2$.
In particular, for every $r\ge 0$,
\begin{equation} 
\EE\big[(\Tr M)^r\big]
=
\left.\frac{\partial^r}{\partial s^r}
\exp\!\left(\mu_t s+\frac{n\sigma_t^2}{2}s^2\right)\right|_{s=0}
=
r!\sum_{m=0}^{\lfloor r/2\rfloor}
\frac{\mu_t^{\,r-2m}}{(r-2m)!\,m!}
\left(\frac{n\sigma_t^2}{2}\right)^m.
\end{equation}
\end{proposition}

\begin{proof}
Shifting $A\mapsto A+t s I_n$ in \eqref{eq:twoHCIZmeasure} produces the factor $e^{s\Tr M}$, thus $\mathcal M_{A,B,t}(s)=\frac{Z_{A+t s I_n,B,t}}{Z_{A,B,t}}$.  Now apply the single-collapse formula \eqref{eq:singleHCIZ}. Since
\begin{equation}
\Tr(A+t s I_n)^2=\Tr A^2+2ts\,\Tr A+n t^2 s^2,
\end{equation}
one obtains
\begin{equation}
Z_{A+t s I_n,B,t}
=
\exp\!\Big((1-t)s\Tr A+t s\Tr B+\frac{n t(1-t)}{2}s^2\Big)\,Z_{A,B,t},
\end{equation}
which is exactly \eqref{eq:trace_mgf_shift_formula}.
\end{proof}

\begin{remark}
The completed-square representation also gives access to exact finite-$n$ polynomial moments. Beyond the purely Gaussian contribution, however, these moments involve HCIZ-tilted angular averages rather than plain Haar averages, thus combining Gaussian Wick contractions with unitary angular integrals.
\end{remark}

\begin{proposition}
\label{prop:spectral_kernel_resolvent}
Let  $\varrho^{\mathrm{sym}}_{A,B,t}$ denote the symmetrization of  the normalized Karlin--McGregor joint density  $\varrho_\mathrm{KM}$,   define the one-point density $\rho^{(1)}_{A,B,t}$ by
\begin{equation} 
\rho^{(1)}_{A,B,t}(x)
=
n\int_{\R^{n-1}}
\varrho^{\mathrm{sym}}_{A,B,t}(x,\lambda_2,\dots,\lambda_n)\,
\dd\lambda_2\cdots \dd\lambda_n,
\qquad
\int_{\R}\rho^{(1)}_{A,B,t}(x)\,\dd x=n.
\end{equation}
Then:
\begin{enumerate}
\item[\textup{(i)}] For every $k\ge0$,
\begin{equation}\label{eq:trace_moment_one_point_density}
\EE[M^k]
=
\frac1n\,\EE[\Tr(M^k)]\,I_n,
\qquad
\EE[\Tr(M^k)]
=
\int_{\R} x^k \rho^{(1)}_{A,B,t}(x)\,\dd x.
\end{equation}

\item[\textup{(ii)}] Since the symmetrized eigenvalue law is a determinantal mixed-type multiple orthogonal polynomial ensemble in the sense of~\cite{DaemsKuijlaars2007,Kuijlaars2010}, its one-point density is given by
\begin{equation}
\rho^{(1)}_{A,B,t}(x)=K_n(x,x),
\end{equation}
where $K_n$ is the mixed-MOP Christoffel--Darboux kernel. Thus
\begin{equation} 
\EE[\Tr(M^k)]
=
\int_{\R} x^k K_n(x,x)\,\dd x.
\end{equation}

\item[\textup{(iii)}] The spectral resolvent
$\Omega(z):=\EE[\Tr(z-M)^{-1}]$, $z\in\C\setminus\R$,
is the Stieltjes transform of the one-point density:
\begin{equation}\label{eq:resolvent_one_point}
\Omega(z)
=
\int_{\R}\frac{\rho^{(1)}_{A,B,t}(x)}{z-x}\,\dd x
=
\int_{\R}\frac{K_n(x,x)}{z-x}\,\dd x.
\end{equation}
\end{enumerate}
\end{proposition}

\begin{proof}
Diagonalizing $M=\Psi \diag(\lambda_1,\dots,\lambda_n)\Psi^\dagger$, we have $\Tr(M^k)=\sum_{j=1}^n \lambda_j^k$. By the definition of the one-point density, for every polynomial $g$,
\begin{equation}
\EE\!\left[\sum_{j=1}^n g(\lambda_j)\right]
=
\int_{\R} g(x)\,\rho^{(1)}_{A,B,t}(x)\,\dd x.
\end{equation}
Taking $g(x)=x^k$ gives
\begin{equation}
\EE[\Tr(M^k)]
=
\int_{\R} x^k \rho^{(1)}_{A,B,t}(x)\,\dd x.
\end{equation}

Since the matrix law is unitarily invariant, $\EE[M^k]$ commutes with every unitary matrix and hence must be a scalar multiple of the identity. Taking traces gives \eqref{eq:trace_moment_one_point_density}.

Item \textup{(ii)} is the standard determinantal one-point correlation formula for the mixed-type multiple orthogonal polynomial ensemble associated with the symmetrized Karlin--McGregor law. Finally,
\begin{equation}
\Tr(z-M)^{-1}=\sum_{j=1}^n \frac1{z-\lambda_j},
\qquad z\in\C\setminus\R,
\end{equation}
so taking expectation and using the definition of $\rho^{(1)}_{A,B,t}$ gives \eqref{eq:resolvent_one_point}.
\end{proof}

\section{Integrable structure of the partition function and fixed-time identities}
\label{sec:integrable}

Section~\ref{sec:unitary_moments} used only unitary invariance and the resulting spectral description. We now turn to the additional structure specific to the two-HCIZ weight itself. At the level of the partition function, the single-HCIZ collapse identifies the nontrivial factor as the compact HCIZ integral, naturally viewed as a Miwa-specialized $2$D Toda $\tau$-function. At the level of the fixed-time ensemble, the same structure yields orbital Ward identities, Schwinger--Dyson equations, and the resulting exact resolvent relations.

\subsection{Miwa specialization and $2$D Toda structure}

It is well known that partition functions of determinantal ensembles are moment determinants, and hence satisfy Desnanot--Jacobi identities and the Toda equations. Accordingly, the two-HCIZ partition function gives rise to a $2$D Toda $\tau$-function. In the present model this $\tau$-function admits an explicit identification: after the single-HCIZ collapse, the nontrivial factor of the partition function is precisely the compact HCIZ integral evaluated at the Miwa data determined by $A$ and $B$.

\begin{proposition}[Miwa specialization and $2$D Toda structure]
Set
\begin{equation} 
\mathsf t_m^{(+)}:=\frac1m\Tr A^m,
\qquad
\mathsf t_m^{(-)}:=\frac1m\Tr B^m,
\qquad m\ge1.
\end{equation}
Define
\begin{equation} 
\Phi_t\big(\mathsf t_2^{(+)},\mathsf t_2^{(-)}\big)
:=
\frac{1-t}{t}\,\mathsf t_2^{(+)}
+
\frac{t}{1-t}\,\mathsf t_2^{(-)}.
\end{equation}
Then
\begin{equation}\label{eq:Z_miwa_factorization}
Z_{A,B,t}
=
(2\pi t(1-t))^{\frac{n^2}{2}}
\exp\!\Big(\Phi_t\big(\mathsf t_2^{(+)},\mathsf t_2^{(-)}\big)\Big)\,
\tau_n(A,B),
\end{equation}
where
\begin{equation}\label{eq:HCIZ_tau_miwa}
\tau_n(A,B)=\tau_n\!\big(\{\mathsf t_m^{(+)}\},\{\mathsf t_m^{(-)}\}\big):=\int_{\UU}\exp\!\big(\Tr(AWBW^\dagger)\big)\,\dd W=\Big(\prod_{k=1}^{n-1}k!\Big)\,
\frac{\det[\e^{a_i b_j}]_{i,j=1}^n}{\Delta(\bm a)\,\Delta(\bm b)}.
\end{equation}
In particular, $\tau_n$, viewed as a function of the Miwa variables associated with $A$ and $B$, is a $2$D Toda $\tau$-function.
\end{proposition}

\begin{proof}
Equation \eqref{eq:Z_miwa_factorization} is simply Corollary~\ref{cor:collapse} rewritten in Miwa variables.
The determinant formula \eqref{eq:HCIZ_tau_miwa} is the classical HCIZ formula.
Since $\tau_n$ is a determinant of moment type, the standard bilinear determinant identities imply that it is a $2$D Toda $\tau$-function;  see, for example,
\cite{ZinnJustin2002,HarnadOrlov2003,Orlov2002}.
\end{proof}

\begin{remark}
The factorization \eqref{eq:Z_miwa_factorization} connects the model with the broader integrable-systems framework surrounding HCIZ/Toda-type matrix integrals, including character expansions, vertex-operator symmetries, and related $W$-algebra or Virasoro-type structures studied in the literature
\cite{Carrell2015,ZinnJustin2002,AdlerShiotaVanMoerbeke1998,AdlerShiotaVanMoerbeke1994,AdlerShiotaVanMoerbeke1995,AdlerVanMoerbeke1995}. In a related combinatorial direction, the same HCIZ factor is also connected with monotone Hurwitz theory and cut-and-join type formalisms
\cite{GouldenGuayPaquetNovak2014}. These related perspectives lie beyond the scope of the present paper, but they indicate natural directions for further study.
\end{remark}

 \subsection{Fixed-time orbital and Schwinger--Dyson identities}

We now consider the orbital Ward--Schwinger--Dyson identities for the fixed-time two-HCIZ ensemble, derived from the fixed-time matrix density
\begin{equation} 
q_t(M)\,\dd M
=
\frac{1}{Z_{A,B,t}}
\exp\!\Big(-\frac{1}{2\sigma_t^2}\Tr M^2\Big)\,
I_A(M)\,I_B(M)\,\dd M,
\qquad
\sigma_t^2=t(1-t),
\end{equation}
where  
\begin{equation} 
I_A(M):=\int_{\UU}\exp\!\Big(\frac1t\Tr(AUMU^\dagger)\Big)\,\dd U,
\qquad
I_B(M):=\int_{\UU}\exp\!\Big(\frac1{1-t}\Tr(MVBV^\dagger)\Big)\,\dd V.
\end{equation}

\begin{proposition}[Fixed-time Schwinger--Dyson identities]
\label{prop:master_SD}
\leavevmode
\begin{enumerate}
\item[\textup{(i)}] For every smooth compactly supported test function
$f:\HH\to\C$, one has
\begin{equation}\label{eq:master_SD_physical}
\EE\!\left[
\Delta f(M)
+
\Tr\!\Big(
\nabla f(M)
\Big[
-\frac{M}{\sigma_t^2}
+\frac{\mathbb A(M)}{t}
+\frac{\mathbb B(M)}{1-t}
\Big]
\Big)
\right]
=0,
\end{equation}
where the Hermitian matrix-valued source fields $\mathbb A(M)$ and
$\mathbb B(M)$ are defined by
\begin{equation}\label{eq:source_fields_defs_SD}
\frac{d}{d\varepsilon}\Big|_{\varepsilon=0}\log I_A(M+\varepsilon H)
=
\frac1t\Tr(\mathbb A(M)H),
\quad
\frac{d}{d\varepsilon}\Big|_{\varepsilon=0}\log I_B(M+\varepsilon H)
=
\frac1{1-t}\Tr(\mathbb B(M)H),
\end{equation}
for every $H\in\HH$.

\item[\textup{(ii)}] Let $\Lambda=\diag(\lambda_1,\dots,\lambda_n)$, and let
$\phi=(\phi_1,\dots,\phi_n)$ be a smooth compactly supported test vector field on
$\R^n$ satisfying
\begin{equation}
\phi_{\sigma(i)}(\lambda_{\sigma(1)},\dots,\lambda_{\sigma(n)})
=
\phi_i(\lambda_1,\dots,\lambda_n)
\end{equation}
for every permutation $\sigma\in S_n$ and every $i=1,\dots,n$. Then
\begin{equation}\label{eq:eigenvalue_SD_physical}
0=
\EE\!\left[
\sum_{i=1}^n \partial_{\lambda_i}\phi_i
+
2\sum_{1\le i<j\le n}\frac{\phi_i-\phi_j}{\lambda_i-\lambda_j}
-
\sum_{i=1}^n
\phi_i
\Big(
\frac{\lambda_i}{\sigma_t^2}
-\frac{\alpha_i(\Lambda)}{t}
-\frac{\beta_i(\Lambda)}{1-t}
\Big)
\right],
\end{equation}
where
\begin{equation} 
\alpha_i(\Lambda)=t\,\partial_{\lambda_i}\log I_A(\Lambda),
\qquad
\beta_i(\Lambda)=(1-t)\,\partial_{\lambda_i}\log I_B(\Lambda).
\end{equation}
\end{enumerate}
\end{proposition}

\begin{proof}
  Observe that $I_A(M)$ and
$I_B(M)$ are smooth positive functions on $\HH$ and their logarithmic directional derivatives are represented, via the trace pairing, by unique Hermitian matrices $\mathbb A(M)/t$ and $\mathbb B(M)/(1-t)$, which gives \eqref{eq:source_fields_defs_SD}. Integrating by parts against the vector field $\nabla f(M)\,q_t(M)$ then gives \eqref{eq:master_SD_physical}.

For part \textup{(ii)}, by the Weyl integration formula, the
 ordered eigenvalues  have  density
\begin{equation}
\rho(\lambda)
\propto
\Delta(\lambda)^2
\exp\!\Big(-\frac1{2\sigma_t^2}\sum_{i=1}^n \lambda_i^2\Big)
I_A(\Lambda)\,I_B(\Lambda).
\end{equation}
Since $\phi$ is smooth and compactly supported, and since $\rho$ vanishes
quadratically on the boundary where eigenvalues collide,
integration by parts on $\mathcal W_n$ gives
\begin{equation}
0
=
\int_{\mathcal W_n}
\sum_{i=1}^n
\partial_{\lambda_i}\!\big(\phi_i(\lambda)\rho(\lambda)\big)\,\dd\lambda=\EE\!\left[
\sum_{i=1}^n \partial_{\lambda_i}\phi_i
+
\sum_{i=1}^n \phi_i\,\partial_{\lambda_i}\log\rho(\lambda)
\right],
\end{equation}
with
\begin{equation}
\partial_{\lambda_i}\log\rho(\lambda)
=
2\sum_{j\ne i}\frac1{\lambda_i-\lambda_j}
-\frac{\lambda_i}{\sigma_t^2}
+\partial_{\lambda_i}\log I_A(\Lambda)
+\partial_{\lambda_i}\log I_B(\Lambda).
\end{equation}
Substituting this into the previous identity and using the definition of $\alpha_i(\Lambda)$ and $\beta_i(\Lambda)$ we get
\begin{equation}0= \EE\!\left[ \sum_{i=1}^n \partial_{\lambda_i}\phi_i+2\sum_{i=1}^n \phi_i\sum_{j\ne i}\frac1{\lambda_i-\lambda_j}-\sum_{i=1}^n \phi_i\Big( \frac{\lambda_i}{\sigma_t^2} -\frac{\alpha_i(\Lambda)}{t} -\frac{\beta_i(\Lambda)}{1-t} \Big) \right].
\end{equation}
Finally, using
\begin{equation}
\sum_{i=1}^n \phi_i\sum_{j\ne i}\frac1{\lambda_i-\lambda_j}
=
\sum_{1\le i<j\le n}\frac{\phi_i-\phi_j}{\lambda_i-\lambda_j},
\end{equation}
we obtain \eqref{eq:eigenvalue_SD_physical}.
\end{proof}

Let $\chi_R\in C_c^\infty(\R)$ satisfy $0\le \chi_R\le 1$ and $\chi_R\equiv 1$ on $[-R,R]$, and set
\begin{equation}
\phi_i^{(R)}(\Lambda):=\frac{\chi_R(\lambda_i)}{z-\lambda_i},
\qquad z\in\C\setminus\R.
\end{equation}
Applying Proposition~\ref{prop:master_SD}\textup{(ii)} to $\phi^{(R)}$ and then letting $R\to\infty$, using the Gaussian decay of the eigenvalue density, gives the fixed-time resolvent identity below.

\begin{corollary}[Fixed-time resolvent identity]
\label{cor:physical_resolvent_SD}
Define the spectral transforms
\begin{equation} 
\Omega(z):=\EE\!\left[\sum_{i=1}^n\frac{1}{z-\lambda_i}\right],
\qquad
\Omega_c(z,z):=
\EE\!\left[\left(\sum_{i=1}^n\frac{1}{z-\lambda_i}\right)^2\right]
-\Omega(z)^2,
\end{equation}
and
\begin{equation} 
H_A(z):=\EE\!\left[\sum_{i=1}^n\frac{\alpha_i(\Lambda)}{z-\lambda_i}\right],
\qquad
H_B(z):=\EE\!\left[\sum_{i=1}^n\frac{\beta_i(\Lambda)}{z-\lambda_i}\right].
\end{equation}
Then
\begin{equation}\label{eq:physical_resolvent_eq}
\Omega(z)^2+\Omega_c(z,z)
=
\frac{z}{\sigma_t^2}\,\Omega(z)
-\frac{n}{\sigma_t^2}
-\frac1t\,H_A(z)
-\frac1{1-t}\,H_B(z).
\end{equation}
\end{corollary}

\begin{remark}
In the general $(p,q)$ case \eqref{eq:physical_resolvent_eq} is not closed, since it couples $\Omega$ to the dressed transforms $H_A$ and $H_B$. We do not develop the corresponding closure hierarchy here.
\end{remark}

\section{Discussion}
\label{sec:discussion}

The primary focus of this work has been the exact finite-$n$ structure of the model, but the results also point toward several asymptotic directions. On the Brownian-bridge side, the two-HCIZ ensemble provides a matrix formulation from which one may seek a Matytsin-type variational description of the evolving empirical density. On the matrix-model side, the compact HCIZ reduction, together with the Ward and Schwinger--Dyson identities and the associated resolvent relations, suggests an approach based on loop equations, spectral curves, and nonlinear steepest descent for the underlying mixed-type Riemann--Hilbert problem.

The partition-function factorization also places the model within the broader integrable setting of Toda-type matrix integrals and HCIZ-type $\tau$-functions. It would be interesting to understand more fully how this perspective relates to Virasoro-type and related symmetry structures, including those appearing in \cite{AdlerVanMoerbekeVanderstichelen2012} for gap probabilities, and whether the present matrix realization helps clarify their finite-$n$ origin. In a different direction, the same HCIZ factor is also connected with monotone Hurwitz theory and cut-and-join formalisms~\cite{GouldenGuayPaquetNovak2014}.

Other natural directions include higher moments, higher-order resolvent identities, non-Gaussian deformations, other symmetry classes, and multitime observables.

\appendix
\section*{Appendices}

\section{Proof of Corollary~\ref{cor:orbital_bridge_eigenvalue_dynamics}}
\label{app:orbital_bridge_generator}

In this appendix we verify that the eigenvalue process of the orbital bridge
recovers the standard Doob-transform dynamics of non-intersecting Brownian
bridges in the Weyl chamber.

By Remark~\ref{rem:Doob_bridge}, the matrix-valued orbital bridge has generator
\begin{equation}
\mathcal L_s\Phi
=
\frac12\Delta_{\HH}\Phi
+
\langle \nabla\log h_s,\nabla\Phi\rangle,
\qquad
h_s(M):=\int_{\mathcal O_B} p_{1-s}(M,Y)\,\nu_B(\dd Y).
\end{equation}
Since $h_s$ is conjugation invariant, there exists a symmetric function
$\widehat h_s$ on $\overline{\mathcal W}_n$ such that
$h_s(M)=\widehat h_s(\bm\lambda(M))$. Therefore, for a conjugation-invariant test function
$\Phi(M)=f(\bm\lambda(M))$, one may pass to the radial part. The radial part
of $\frac12\Delta_{\HH}$ on conjugation-invariant functions is
\begin{equation}
\frac12\sum_{i=1}^n \partial_i^2
+
\sum_{i\neq j}\frac{1}{\lambda_i-\lambda_j}\partial_i.
\end{equation}
Moreover,
\begin{equation}
\langle \nabla\log h_s,\nabla\Phi\rangle
=
\sum_{i=1}^n
\bigl(\partial_i\log\widehat h_s(\bm\lambda)\bigr)\,\partial_i f(\bm\lambda).
\end{equation}
Thus the eigenvalue process has generator
\begin{equation}
\label{eq:generator_radial_bridge_raw}
(\mathcal G_s f)(\bm\lambda)
=
\frac12\sum_{i=1}^n \partial_i^2 f(\bm\lambda)
+
\sum_{i\neq j}\frac{1}{\lambda_i-\lambda_j}\partial_i f(\bm\lambda)
+
\sum_{i=1}^n
\bigl(\partial_i\log\widehat h_s(\bm\lambda)\bigr)\,\partial_i f(\bm\lambda).
\end{equation}

Assume now that $b_1<\cdots<b_n$. By the same orbital heat-kernel computation
as in the proof of Theorem~\ref{thm:orbital_bridge_lift},
\begin{equation}
\widehat h_s(\bm\lambda)
\propto
\exp\!\left(-\frac{1}{2(1-s)}\sum_{i=1}^n\lambda_i^2\right)
\int_{\UU}
\exp\!\left(\frac{1}{1-s}\Tr(BV\Lambda V^\dagger)\right)\,\dd V,
\qquad
\Lambda=\diag(\lambda_1,\dots,\lambda_n).
\end{equation}
Applying the Harish--Chandra--Itzykson--Zuber formula to the orbital integral gives
\begin{equation}
\widehat h_s(\bm\lambda)
=
C_{n,s,B}\,
\exp\!\left(-\frac{1}{2(1-s)}\sum_{i=1}^n\lambda_i^2\right)
\frac{\det[\exp(b_k\lambda_j/(1-s))]_{k,j=1}^n}{\Delta(\bm\lambda)},
\end{equation}
where $C_{n,s,B}$ is independent of $\bm\lambda$. Taking logarithmic
derivatives, we obtain
\begin{equation}
\partial_i\log\widehat h_s(\bm\lambda)
=
-\frac{\lambda_i}{1-s}
+
\partial_i\log\det[\exp(b_k\lambda_j/(1-s))]_{k,j=1}^n
-
\partial_i\log\Delta(\bm\lambda).
\end{equation}
Since
$\partial_i\log\Delta(\bm\lambda)
=
\sum_{j\neq i}\frac{1}{\lambda_i-\lambda_j}$,
the Vandermonde derivative cancels exactly the Dyson repulsion term in
\eqref{eq:generator_radial_bridge_raw}. Therefore
\begin{equation}
\label{eq:generator_after_cancellation}
(\mathcal G_s f)(\bm\lambda)
=
\frac12\sum_{i=1}^n \partial_i^2 f(\bm\lambda)
+
\sum_{i=1}^n
\left(
-\frac{\lambda_i}{1-s}
+
\partial_i\log\det[\exp(b_k\lambda_j/(1-s))]_{k,j=1}^n
\right)\partial_i f(\bm\lambda).
\end{equation}

Finally, using the one-dimensional heat kernel identity
\begin{equation}
p_{1-s}(x,b)
=
\frac{1}{\sqrt{2\pi(1-s)}}
\exp\!\left(
-\frac{x^2}{2(1-s)}
+\frac{bx}{1-s}
-\frac{b^2}{2(1-s)}
\right),
\end{equation}
we see that $K_s^{(B)}(\bm\lambda)$ from
\eqref{eq:K_s_terminal_main} satisfies
\begin{equation}
K_s^{(B)}(\bm\lambda)
=
\widetilde C_{n,s,B}\,
\exp\!\left(-\frac{1}{2(1-s)}\sum_{i=1}^n\lambda_i^2\right)
\det[\exp(b_k\lambda_j/(1-s))]_{k,j=1}^n,
\end{equation}
with $\widetilde C_{n,s,B}$ independent of $\bm\lambda$. Thus
\begin{equation}
\partial_i\log K_s^{(B)}(\bm\lambda)
=
-\frac{\lambda_i}{1-s}
+
\partial_i\log\det[\exp(b_k\lambda_j/(1-s))]_{k,j=1}^n.
\end{equation}
Substituting this into \eqref{eq:generator_after_cancellation} gives exactly
\eqref{eq:generator_Doob_K_main}.

The confluent case of repeated terminal points follows by the standard Karlin--McGregor/HCIZ coalescence procedure.

\section{Proof of Proposition~\ref{prop:tilted_jacobi}}
\label{app:tilted_jacobi}

We use the standard fact (see, for example, \cite{EdelmanSutton2008}) that if
$Y$ is the $r\times \ell$ upper-left block of a Haar unitary matrix in $\UU$,
with $r+\ell\le n$, and if $\lambda_1,\dots,\lambda_p\in(0,1)$,
$p=\min(r,\ell)$, are the nonzero eigenvalues of $YY^\dagger$, then their joint
density is
\begin{equation}\label{eq:appendix_jacobi_block_law}
\frac{1}{Z_{n,r,\ell}}
\Delta(\lambda)^2
\prod_{i=1}^{p}\lambda_i^{|r-\ell|}(1-\lambda_i)^{\,n-r-\ell}\,\dd \lambda.
\end{equation}

\begin{proof} 
For the projection case, assume $r+\ell\le n$. Then
\begin{equation}
A=\diag(\underbrace{a,\dots,a}_{r},\underbrace{0,\dots,0}_{n-r}),
\qquad
B=\diag(\underbrace{b,\dots,b}_{\ell},\underbrace{0,\dots,0}_{n-\ell}),
\end{equation}
so that $A=aP_r$ and $B=bP_\ell$, where $P_r$ and $P_\ell$ are the
diagonal projections of ranks $r$ and $\ell$. By
Corollary~\ref{cor:collapse},
\begin{equation}
Z_{A,B,t}
=
(2\pi\sigma_t^2)^{n^2/2}
\exp\!\Big(\frac{1-t}{2t}a^2r+\frac{t}{2(1-t)}b^2\ell\Big)
\int_{\UU} e^{\Tr(AWBW^\dagger)}\,\dd W.
\end{equation}
If $Y$ denotes the $r\times \ell$ upper-left block of $W$, then
\begin{equation}
\Tr(AWBW^\dagger)
=
ab\,\Tr(P_rWP_\ell W^\dagger)
=
ab\,\Tr(YY^\dagger)
=
ab\sum_{i=1}^p \lambda_i,
\end{equation}
where $\lambda_1,\dots,\lambda_p$ are the nonzero eigenvalues of $YY^\dagger$.
Using \eqref{eq:appendix_jacobi_block_law}, we obtain
\begin{equation}
\int_{\UU} e^{\Tr(AWBW^\dagger)}\,\dd W
=
Z_{n,r,\ell}^{-1}
\int_{(0,1)^p}
\Delta(\lambda)^2
\prod_{i=1}^{p}
\lambda_i^{|r-\ell|}(1-\lambda_i)^{\,n-r-\ell}
e^{ab\lambda_i}\,\dd \lambda,
\end{equation}
which is exactly \eqref{eq:proj_direct_jacobi}--\eqref{eq:proj_direct_jacobi_const}.

For the balanced signature case, let $n=2m$ and $J=2P-I$,
 where $P$ is the rank-$m$ projection onto the first $m$ coordinates. Then
\begin{equation}
\Tr(JWJW^\dagger)=4\Tr(PWPW^\dagger)-2m.
\end{equation}
Thus, for $A=aJ$ and $B=bJ$,
\begin{equation}
\Tr(AWBW^\dagger)
=
ab\,\Tr(JWJW^\dagger)
=
4ab\,\Tr(PWPW^\dagger)-2mab.
\end{equation}
Therefore
\begin{equation}
\int_{\mathrm U(2m)} e^{\Tr(AWBW^\dagger)}\,\dd W
=
e^{-2mab}
\int_{\mathrm U(2m)} e^{4ab\,\Tr(PWPW^\dagger)}\,\dd W.
\end{equation}
Applying the projection case with $r=\ell=m$, for which the Jacobi exponents
vanish, gives
\begin{equation}
\int_{\mathrm U(2m)} e^{\Tr(AWBW^\dagger)}\,\dd W
=
e^{-2mab} Z_{2m,m,m}^{-1}
\int_{(0,1)^m}
\Delta(\lambda)^2
\prod_{i=1}^{m} e^{4ab\lambda_i}\,\dd \lambda.
\end{equation}
Substituting this into Corollary~\ref{cor:collapse} gives
\eqref{eq:sig_direct_jacobi}--\eqref{eq:sig_direct_jacobi_const}.
\end{proof}

\printbibliography
\end{document}